\newtheorem{prop}{Proposition}
\newtheorem{theorem}[prop]{Theorem}
\theoremstyle{definition}
\begin{document}
\title{A Population-centric Approach to the Beauty Contest Game}
\author{Marc Harper}
\address{University of California Los Angeles}
\email{marc.harper@gmail.com} 

\date{\today}


\begin{abstract}
An population-centric analysis for a version of the $p$-beauty contest game is given for the two-player, finite population, and infinite population cases. Winning strategies are characterized in terms of iterative thinking relative to the population. To win the game one needs to iterate more times than the ambient population, but not too many more times depending on the population size and the value of $p$.
\end{abstract}

\maketitle

\section{Introduction}

The beauty contest game concept originated with John Maynard Keynes \cite{Keynes36} and has been studied in many contexts and populations \cite{nagelsurvey} \cite{bosch2002one} \cite{Grosskopf08} \cite{Sbriglia08}. In a beauty contest game, all players guess a number within a given interval, with the goal of guessing $p$ times the average of all other guesses, where $p$ is a number in the interval $(0,1)$ and guesses are constrained between 0 and 100. For instance, for $p=1/2$ a player attempts to guess what will be half of the average of all guesses (including the player's guess). The game is commonly played with values of $p=1/2$ and $p=2/3$. It will be shown below that a strategy exists to always win for $0 < p \leq 1/2$ in a particular version of the game. This situation is more complex, however, for $p > 1/2$, but the possible winning strategies are characterized in these cases as well.

In \cite{Camerer03}, a cognitive hierarchy model of players of the beauty prize game is given where each player is modeled by how many iterations of reducing the guess to match the average that a player goes through mentally. For example, a 1-player guesses $p$ times the center of the interval (the average score). A 2-player considers that other players may guess $p$ times half of the interval, and so guesses $p$ times the center. Call a player a $k$-player if they guess $p^k c$, where $c$ is the center of the interval. The parameter $k$ need not be an integer. Camerer et al experimented with several groups of individuals to determine the average depth of iteration of a particular population. Interestingly, computer scientists and game theorists showed the most ``steps of thinking'', 3.8 and 3.7 respectively. Players from a version of the game appearing in a newspaper averaged 3.0 steps and high school students 1.6 steps. Refer to the table in \cite{Camerer03} for other groups and variations.

While a model that had the proportion of $k+1$-players decreasing relative to the proportion of $k$-players was used in \cite{Camerer03}, this article takes the point of view common in evolutionary game theory \cite{Hofbauer03} \cite{Gintis}. Given a population of $k$-players, can a $l$-player dominate or invade the population? We characterize successful strategies for all $p$ in multiple cases and give computational results in some more difficult but interesting cases. The analysis relies only on a basic knowledge of single-variable calculus.

In the first section, pairwise interactions in a selective population are considered. In the next section, a different approach is used that captures the fact that the players in this game act simultaneously in both infinite and finite populations. It is assumed that $0 < p < 1$ so as to avoid the degenerate cases $p=0$ and $p=1$.

\section{Head-to-Head Gameplay}

An evolutionary model such as the replicator equation pits players against each other in pairwise interactions, so first consider the two-player interaction of a $k$-player and an $l$-player. Assume that $l = k + m$, i.e. that the $l$-player uses $m$ more steps than the $k$-player, without loss of generality.

How does a $k$-player fare versus a $k+m$ player, for $m > 0$? The target of the guess is
\[ p \cdot \left(\text{average guess}\right) = p \cdot \frac{p^k c + p^{k+m} c}{2} = \frac{c}{2}\left( p^{k+1} + p^{k+m+1}\right) \]
Since $p \in (0,1)$, the following inequalities are valid
\[ c p^{k+m} \leq c p^{k+m+1} \leq \frac{c}{2}\left( p^{k+1} + p^{k+m+1}\right) \leq c p^{k+1} \leq c p^k, \]
and so the distance from the mean for the $k$-player is
\[ c p^k - \frac{c}{2}\left( p^{k+1} + p^{k+m+1}\right) = c p^k \left( 1 - \frac{p}{2} - \frac{p^{m+1}}{2}\right)\]
and for the $k+m$ player is
\[ \frac{c}{2}\left( p^{k+1} + p^{k+m+1}\right) - cp^{k+m}  = c p^k \left(\frac{p}{2} + \frac{p^{m+1}}{2} - p^{m}\right) . \]
To determine which player is closer, take the difference and notice that
\begin{align*}
c p^k & \left[ \left(1 - \frac{p}{2} - \frac{p^{m+1}}{2} \right) - \left(\frac{p}{2} + \frac{p^{m+1}}{2} - p^{m}\right) \right]\\
& = c p^k \left(1 - p - p^{m+1} + p^m \right)\\
&= c p^k (1-p)(1+p^m) > 0. 
\end{align*}

The player that uses more steps $l = k+m$ wins the head-to-head matchup, so the $l$-player dominates. A population of two-player interactions of these two strategies tends toward fixation of the $l$-player since it bests the $k$-player and ties itself, hence has selective advantage in a population of $k$-players and $l$-player interacting pairwise.

\section{Population Games}

The beauty prize game is intended for simultaneous action of all players, not pairwise interactions between randomly-matched players. Still, the goal is to determine if a $(k+m)$-player can succeed in a population of $k$ players, and for what values of $m$ this is true.  Suppose a $(k+m)$-player enters a population of $N$ $k$-players. The target of the guess of the $(k+m)$-player in this case is
\[ p \cdot \left(\text{average guess}\right) = \frac{p}{N+1} \left(N p^k c + p^{k+m} c\right) = c p^{k+1}\left(\frac{N+p^m}{N+1} \right). \]

\subsection{Infinitely Large Population}

Let us first consider the infinitely large population case. As $N \to \infty$, the target guess converges to $c p^{k+1}$ (for any value of $m$ and $p$). Hence a $(k+m)$-player wins the game for any $m \in (0,1]$, since the guess is clearly closer to the target as the are both above the average.  For general $m$, again consider the difference of the distances of the two player types to the target of the guess, and notice that the value of $c>0$ only scales the difference and can be discarded. The difference is
\[ \phi(p) = 1 - 2p + p^m. \]

Intuitively, if $p < 1/2$, the computation of $p$ times the average guess favors lower guessers. Indeed, if $p \leq 1/2$, $\phi(p) > 0$ for all $m$ since $\phi(p) = 1 - 2p + p^m \geq p^m > 0$. In other words, if $p \leq 1/2$, using more steps than the players in the population always wins. Assume now that $p > 1/2$.

It is easy to see that for $m=2$, $\phi(p) > 0$ for all $p$ by factorization. Observe that $\phi$ is decreasing in $m$, since the derivative with respect to $m$ is negative for $1/2 < p < 1$, so for $0 < m \leq 2$, $\phi(p) > 0$. For values of $m > 2$, it is possible for $\phi$ to be negative.

\begin{figure}[H]
    \centering
    \includegraphics[scale=0.5]{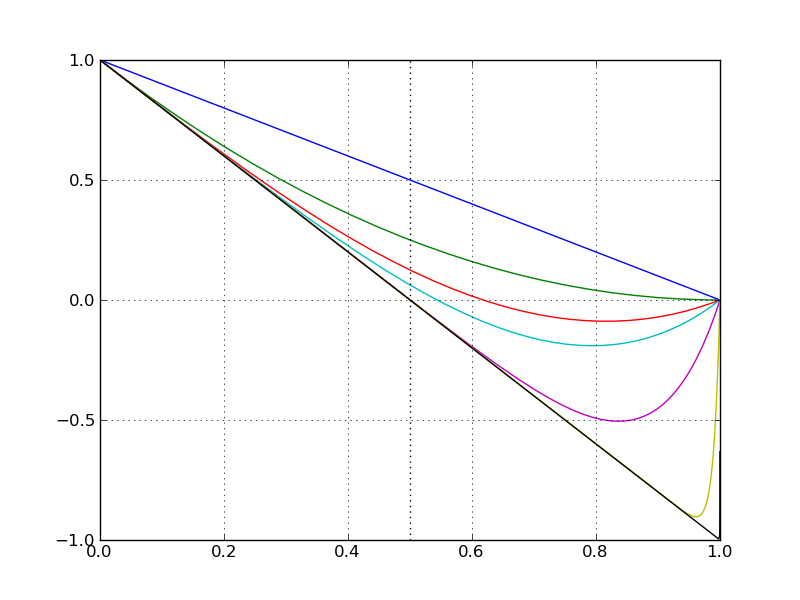}
    \caption{Plot of $\phi(p)$ for $m = 1, 2, 3, 4, 10, 100, \infty$ (from top to bottom). For $m>2$, these curves have a root in the interval $(1/2, 1)$, indicating a change in advantage between the players.}
    \label{phi_plot}
\end{figure}

For integer values of $m > 2$ and $1/2 < p < 1$, consider the factorization
\[ \phi(p) = (1-p) \left( 1 - p - p^2 - \cdots - p^{m-1} \right). \]
There is a root at the degenerate case $p=1$. By Descartes' rule of alternating signs, the rightmost quantity has at most one positive root. If this root lies in the interval $(1/2, 1)$ then there is a change in which player type is favored for a given $m$. This is true for all $m > 2$ as shown by the following argument.

\begin{prop}
For $m > 2$, $\phi(p)$ has a unique root in the interval $(1/2, 1)$. 
\label{inf_pop}
\end{prop}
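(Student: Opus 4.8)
The plan is to exploit the convexity of $\phi$. First I would compute the second derivative $\phi''(p) = m(m-1)p^{m-2}$, which is strictly positive for $p > 0$ whenever $m > 2$, so $\phi$ is strictly convex on $(0,\infty)$. A strictly convex function of one variable takes each value at most twice (if $\phi(a)=\phi(b)=\phi(c)$ with $a<b<c$, then $\phi(b)$ is strictly less than the common value, a contradiction), so in particular $\phi$ has at most two zeros. Since $p=1$ is always a zero, because $\phi(1) = 1 - 2 + 1 = 0$, at most one further zero can lie in $(0,\infty)$, and \emph{a fortiori} at most one in $(1/2,1)$. That settles uniqueness.

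For existence I would use the intermediate value theorem. On the one hand, $\phi(1/2) = 1 - 1 + 2^{-m} = 2^{-m} > 0$. On the other hand, $\phi'(1) = -2 + m = m-2 > 0$, so $\phi$ is strictly increasing in a neighborhood of $p=1$; together with $\phi(1) = 0$ this forces $\phi(q) < 0$ for some $q \in (1/2,1)$ sufficiently close to $1$. Applying the intermediate value theorem to the continuous function $\phi$ on $[1/2,q]$ produces a zero in $(1/2,q) \subset (1/2,1)$. Combined with the uniqueness argument, $\phi$ has exactly one root in $(1/2,1)$.

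The only genuinely delicate point is the existence step: one must rule out the possibility that $\phi$ remains nonnegative on $(1/2,1)$ and first vanishes at the endpoint $p=1$. The observation $\phi'(1) = m-2 > 0$ for $m>2$ is exactly what excludes this, since it shows that as $p$ increases through $1$ the value $\phi(p)$ is increasing through $0$, hence $\phi(p) < 0$ for $p$ slightly below $1$. Everything else is a routine sign computation.

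For integer $m>2$ one can instead argue directly from the factorization $\phi(p) = (1-p)\bigl(1 - p - p^2 - \cdots - p^{m-1}\bigr)$ recorded just above the statement: the second factor equals $2^{1-m} > 0$ at $p=1/2$ and $2-m < 0$ at $p=1$, so by the intermediate value theorem it has a zero in $(1/2,1)$, and by Descartes' rule of signs it has no other positive zero; since the factor $1-p$ contributes only the root $p=1$, this again gives precisely one root of $\phi$ in $(1/2,1)$. I would present the convexity argument as the main proof since it covers all real $m > 2$ uniformly, and mention the factorization as a remark for the integer case.
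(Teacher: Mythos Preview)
Your proof is correct and slightly cleaner than the paper's, though the two share the same skeleton: show $\phi(1/2)>0$, produce a point in $(1/2,1)$ where $\phi<0$, apply the intermediate value theorem, and then argue uniqueness from the shape of $\phi$. The difference lies in how the negative value and the uniqueness are obtained. The paper explicitly locates the unique critical point $p_{\min}=(2/m)^{1/(m-1)}$, checks it lies in $(0,1)$ for $m>2$, and asserts (via ``algebraic manipulation'') that $\phi(p_{\min})<0$; uniqueness then follows because $\phi$ has no other extremum on the interval and $\phi(1)=0$. You instead observe $\phi''(p)=m(m-1)p^{m-2}>0$, so strict convexity forces at most two zeros on $(0,\infty)$, one of which is $p=1$; for the negative value you use $\phi'(1)=m-2>0$ together with $\phi(1)=0$ to conclude $\phi(q)<0$ for $q$ just below $1$. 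Your route avoids computing $p_{\min}$ and the unproved claim that $\phi(p_{\min})<0$, and the convexity-based counting of zeros is a tidier uniqueness argument than the paper's appeal to the absence of further extrema. The factorization remark for integer $m$ matches the paper's own discussion preceding the proposition and is a nice complement.
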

\begin{proof}
For constant $m$, $\phi$ has a local minimum at $p_{min} = \left( \frac{2}{m} \right) ^{ \frac{1}{m-1}}$, which is in the interval $(0,1)$ if and only if $m>2$ (by inspection).  For $m > 2$, this minimum is negative, which follows from algebraic manipulation. Since $\phi(p) > 0$ for $p < 1/2$, and the minimum $\phi(p_{min})$ is negative, by the intermediate value theorem there is a value $p^{*}$, with $1/2 < p^{*} < p_{min}$, for which $\phi(p^{*}) = 0$. This value is unique since there are no other local extrema in the interval and $\phi(1) = 0$, and so no other changes in the sign of $\phi$ since $\phi$ is continuous.
\end{proof}

Notice that $p^{*} \to 1/2$ as $m \to \infty$ from the form of $\phi$, and so the interval $(0, p^{*})$ in which a $(k+m)$-player can win shrinks to $(0, 1/2]$ as $m$ gets larger. See Figure \ref{p_star_p_min}.

%
%

\begin{figure}[H]
    \centering
    \includegraphics[scale=0.5]{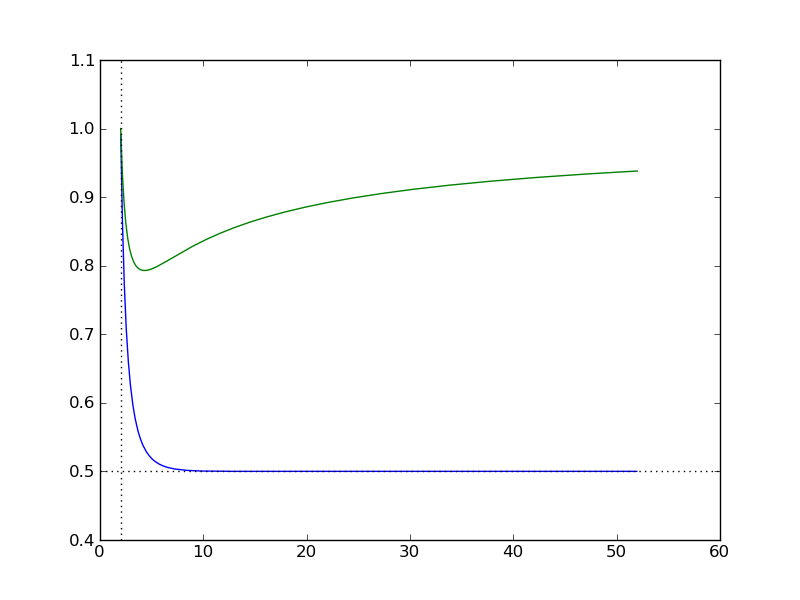}
    \caption{Plot of $p_{min}(m), p^{*}(m)$. Notice that $1/2 < p^{*}(m) < p_{min}(m) < 1$ for all $m > 2$.}
    \label{p_star_p_min}
\end{figure}

In a large population, the key to winning the game is to use within 2 steps of the opponent strategy for all $p$. It is possible to win with larger values of $m$, that is by using more than 2 additional steps, but this depends heavily on the value of $p$ and the range of admissible $p$ drops off quickly as $m$ increases, as is evident from Figures \ref{phi_plot} and \ref{p_star_p_min}. Hence winning the game hinges on an accurate estimation of the number of steps used in the opponents strategy and using within approximately two steps more in forming a guess. It pays to be more clever than your opponents, but not too much more clever.

\subsection{Finite Populations}

Let us drop the assumption of an infinitely large population and derive population-size dependent results. For populations of size $N$ with $2 \leq N < \infty$, a derivation similar to the above gives the function $\phi(p)$ as
\[ \phi(p) = 1 - 2p \frac{N+p^m}{N+1} + p^m \]

For large $m$, the function reduces to
\[ \phi(p) = 1 - 2\frac{N}{N+1} p. \] In this case, $\phi(p) > 0$ if $p < 1/2 + 1/(2N)$ and less than zero if $p > 1/2 + 1/(2N)$ since $\phi$ is linear in $p$ for fixed $N$. Similarly, for large $N$, $\phi$ converges to the infinite-population function as expected.

\begin{figure}[H]
    \centering
    \includegraphics[scale=0.5]{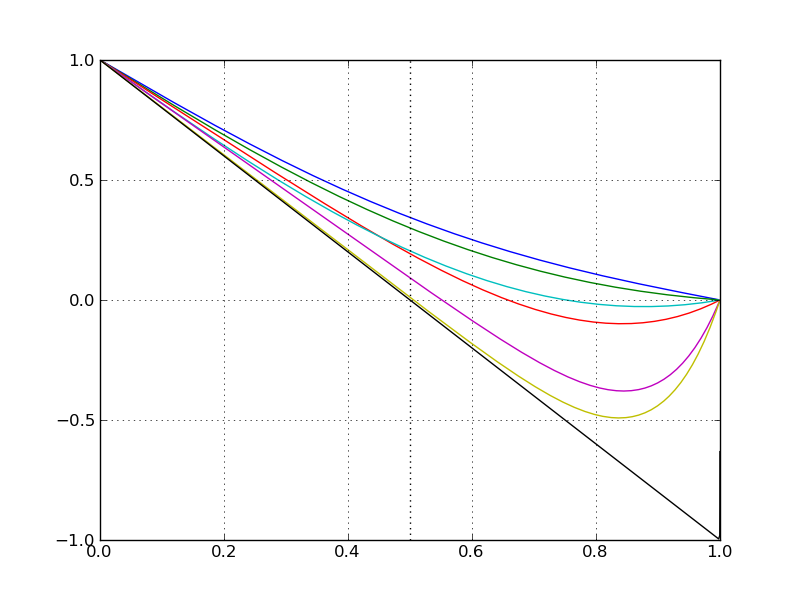}
    \caption{Plot of $\phi(p)$ for $(N, m) = (3,3), (4,3), (5,5), (10, 3), (10, 10), (100, 10), (10000,10000)$ (from top to bottom, note that $(4,3)$ and $(5,5)$ intersect). For small values of $N$ and $m$ these curves may not have a root in the interval $(1/2, 1)$. This differs from the infinitely large population (in which there is a root if $m > 2$).}
    \label{phi_N_m_plot}
\end{figure}

In the general case, notice that for very small values of $N$ and $m$, e.g. $N=3$ and $m=3$ that the function $\phi(p)$ is always positive. It takes either a larger value of $m$ or a larger value of $N$ for $\phi$ to have a root, but for essentially the same reasons as Proposition \ref{inf_pop}, when the root exists it is unique. Again, since $0 < p < 1$ and $m > 0$,
\[ \phi(p) = 1 - 2p \frac{N+p^m}{N+1} + p^m > 1 - 2p + p^m, \]
and so $\phi(p) > 0$ if $p \leq 1/2$ as in the infinite population case. Though it is difficult to solve for the minimum value explicitly so as to be able to use the intermediate value theorem as in the infinite population case, it is follows that for large $N$ or large $m$ that there exists a root $p^{*} > 1/2$ because of the large $m$ and large $N$ limits and since $\phi$ is continuous.

To investigate small values of the parameters, for each $N$ let $m^{*}(N)$ denote the smallest value of $m$ such that there is a root, given that there is at least one such $m$. Similarly, for each $m$ let $N^{*}(m)$ be the smallest value of $N$ such that there is a root, given that there is at least one such $N$. There exists a solution for all $m > m^{*}(N)$ and for all $N > N^{*}(m)$ if these values exist, respectively. Figure \ref{n_star_m_star} demonstrates that very small values of $N$ or $m$ require large values of the other, but once the $N$ exceeds approximately 10, there are solutions for $m \geq 3$.

\begin{figure}[H]
    \centering
    \includegraphics[scale=0.5]{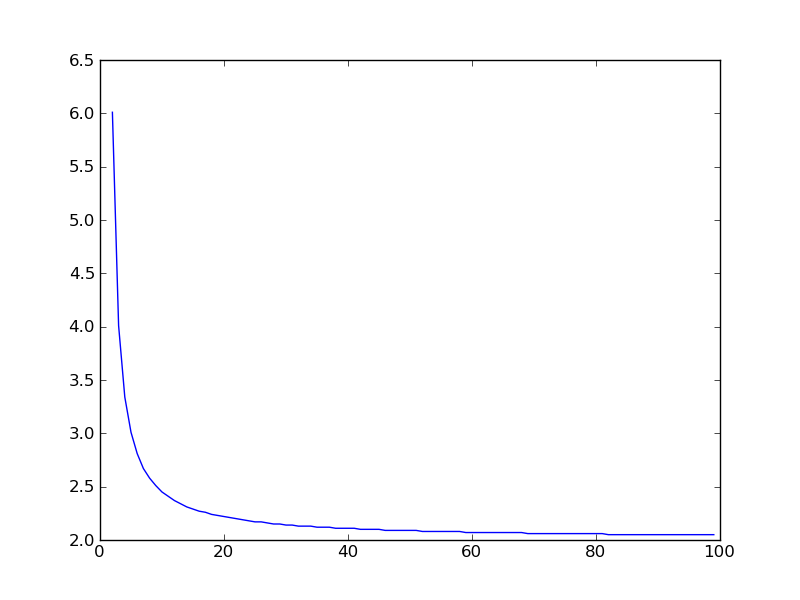}
    \caption{Plot of $m^{*}(N)$, critical values of $m$ for which a solution $p^{*}$ exists. For all pairs $(N, m)$ in the region above the curve, $\phi$ has a root in the interval $(1/2, 1)$. Below the curve, no such solution exists.}
    \label{n_star_m_star}
\end{figure}

\begin{prop}
Both $N^{*}(m')$ and $m^{*}(N')$ exist for all $N' > 1$ and $m' > 2$. $m^{*}(N') = m'$ if and only if $N^{*}(m') = N'$.
\end{prop}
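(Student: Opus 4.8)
The plan is to reduce both assertions to the monotonicity of the family $\phi_{N,m}(p)=1-2p\frac{N+p^{m}}{N+1}+p^{m}$ in its two parameters, together with an analysis of $\phi_{N,m}$ near its permanent zero $p=1$. First I would record the identity
\[
\phi_{N,m}(p)=(1-2p+p^{m})+\frac{2p\,(1-p^{m})}{N+1},
\]
which writes $\phi_{N,m}$ as the infinite-population function $\phi_\infty(p)=1-2p+p^{m}$ plus a term that, for $0<p<1$ and $m>0$, is positive and strictly decreasing in $N$; differentiating, $\partial_m\phi_{N,m}(p)=p^{m}(\ln p)(1-\tfrac{2p}{N+1})<0$ for $p\in(1/2,1)$ and $N\ge2$. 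Hence $\phi_{N,m}(p)$ is strictly decreasing in both $N$ and $m$ on $(1/2,1)$. Since $\phi_{N,m}(1/2)>0$ and $\phi_{N,m}(1)=0$ for every $N,m$, a sign-changing root in $(1/2,1)$ exists precisely when $\min_{[1/2,1]}\phi_{N,m}<0$; by the monotonicity just noted, the set $B$ of parameters for which this holds is upward closed, so $m^{*}(N)$ and $N^{*}(m)$ are exactly the two coordinate thresholds of $B$, and the claimed equivalence says that the boundary of $B$ is one and the same curve whether traced as $m=m^{*}(N)$ or as $N=N^{*}(m)$.

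To identify that boundary I would study $\phi_{N,m}$ as $p\uparrow1$. Since $\phi_{N,m}(1)=0$, the sign of $\phi_{N,m}$ just left of $1$ is governed by $\phi_{N,m}'(1)=(m-2)-\tfrac{2m}{N+1}$. If this is positive, then $\phi_{N,m}<0$ on some interval $(1-\delta,1)$, so with $\phi_{N,m}(1/2)>0$ the intermediate value theorem gives a root and $(N,m)\in B$. If it is negative, then $\phi_{N,m}>0$ just left of $1$; invoking the fact --- asserted in the text ``for essentially the same reasons as Proposition \ref{inf_pop}'' --- that the root in $(1/2,1)$, when it exists, is unique and is a genuine sign change, this forces $\phi_{N,m}\ge0$ on all of $(1/2,1)$, so $(N,m)\notin B$. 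On the critical locus $\phi_{N,m}'(1)=0$, i.e.\ $m=\tfrac{2(N+1)}{N-1}$, the one-line computation $\phi_{N,m}''(1)=m(m-1)-\tfrac{2m(m+1)}{N+1}=2>0$ shows $\phi_{N,m}>0$ near $p=1$, so again $(N,m)\notin B$. Therefore
\[
B=\{(N,m):\ m>\tfrac{2(N+1)}{N-1}\}=\{(N,m):\ N>\tfrac{m+2}{m-2}\}.
\]

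The proposition now follows at once. The $m$-threshold is $m^{*}(N')=\tfrac{2(N'+1)}{N'-1}$, which is finite and larger than $2$ for every $N'>1$, and the $N$-threshold is $N^{*}(m')=\tfrac{m'+2}{m'-2}$, which is finite and larger than $1$ for every $m'>2$; so both exist throughout the stated ranges. Furthermore $N\mapsto\tfrac{2(N+1)}{N-1}$ and $m\mapsto\tfrac{m+2}{m-2}$ are mutually inverse bijections of $(1,\infty)$ onto $(2,\infty)$, so $m^{*}(N')=m'$ holds if and only if $N^{*}(m')=N'$.

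The step I expect to be the real obstacle is the converse in the middle paragraph: excluding the possibility that $\phi_{N,m}$ dips below $0$ somewhere in $(1/2,1)$ while remaining nonnegative immediately to the left of $p=1$ (an interior dip-and-return, or a tangency). This is exactly the uniqueness-of-root statement the paper relies on, which it establishes by Descartes' rule of signs applied to the numerator $(N+1)(1+p^{m})-2p(N+p^{m})$ together with continuity, in parallel with the proof of Proposition \ref{inf_pop}. Granting that, the rest is the two short derivative computations above and elementary algebra.
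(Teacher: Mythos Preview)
Your proof is correct (with the same caveat the paper carries, namely the uniqueness-of-root claim for finite $N$, which you flag explicitly), but your route is genuinely different from the paper's. The paper argues qualitatively: it uses the large-$m$ and large-$N$ limits of $\phi$ to show that for any fixed $N'>1$ or $m'>2$ a root eventually appears, invokes monotonicity of $\phi$ in $N$ and $m$ to conclude that $N^{*}$ and $m^{*}$ are decreasing, and then proves the biconditional by a short minimality/contradiction argument (if $N^{*}(m')<N'$ one could slide $m$ below $m'$ and still have a root at $N'$, contradicting $m^{*}(N')=m'$). You instead compute the boundary of the region $B$ explicitly by analyzing $\phi_{N,m}$ near its fixed zero $p=1$: the sign of $\phi_{N,m}'(1)=(m-2)-\tfrac{2m}{N+1}$ decides whether $\phi$ dips negative just left of $1$, yielding the closed-form threshold $m^{*}(N)=\tfrac{2(N+1)}{N-1}$, $N^{*}(m)=\tfrac{m+2}{m-2}$, and the biconditional then reduces to the observation that these are mutually inverse. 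Your approach buys strictly more --- an explicit formula for the curve that the paper only exhibits numerically in Figure~\ref{n_star_m_star} --- at the cost of leaning slightly harder on the uniqueness/sign-change assertion to rule out interior dips when $\phi_{N,m}'(1)\le 0$. One small correction: the paper does not actually justify finite-$N$ uniqueness via Descartes' rule (that is used only for integer $m$ in the infinite case); it appeals informally to the local-minimum structure of Proposition~\ref{inf_pop}.
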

\begin{proof}
To show existence, consider the limits given above. Since $\phi$ is continuous and decreasing in $m$ and $n$ for any particular $p$, it follows that for any fixed $N' > 1$ or $m' > 2$ that there exists a root for large enough $m$ or $N$, respectively. Moreover, both $N^{*}$ and $m^{*}$ are decreasing, since a larger value of $N$ potentially permits a smaller value of $m$, and vice versa, by inspection of $\phi$.

Suppose that $m', N'$ is a pair such that $\phi$ has a solution, and let $m^{*}(N') = m'$. Then it follows immediately that $N^{*}(m') \leq N'$. To obtain equality, suppose for the sake of contradiction that $N^{*}(m') < N'$. Then by the monotonicity of $N^{*}$, there is a value $m'' < m'$ such that $m''$ and $N'$ have a solution which contradicts the minimality of $m'$. (In other words, $N'$ is above the curve of $N^{*}$, so there are points to the left along the line $N=N'$ still above the curve.) The other implication is completely analogous.
\end{proof}

Combining the discussion and previous proposition yields the following theorem.

\begin{theorem}
Let $(N, m)$ lie in the region above the curve $N^{*}$ (or equivalently $m^{*}$). Then $\phi(p)$ has a solution in the interval $(1/2, 1)$.
\end{theorem}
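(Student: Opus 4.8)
The plan is to obtain this theorem as a short corollary of the preceding discussion together with the previous proposition, using only the intermediate value theorem. I would write $\phi_{N,m}$ for $\phi$ regarded as a function of $p$ alone, with the parameters $N \ge 2$ and $m > 0$ displayed so that the dependence on $m$ can be tracked. Two facts carry the argument. First, as already established in the discussion, $\phi_{N,m}(p) > 1 - 2p + p^{m} > 0$ for every $p \le 1/2$; in particular $\phi_{N,m}(1/2) > 0$ for all admissible $N$ and $m$. Second, for each fixed $p \in (0,1)$ and $N \ge 2$ the quantity $\phi_{N,m}(p)$ is \emph{strictly} decreasing in $m$ --- this is asserted in the text, and I would confirm it by differentiating,
\[ \frac{\partial \phi_{N,m}}{\partial m}(p) = p^{m}\,(\ln p)\left(1 - \frac{2p}{N+1}\right), \]
which for $p \in (0,1)$ and $N \ge 2$ is the product of a positive factor, a negative factor, and a positive factor (the last since $2p/(N+1) < 2/3 < 1$), hence negative.

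With these in hand I would argue as follows. Read ``above the curve $m^{*}$'' as $m > m^{*}(N)$. Since $m^{*}(N)$ is the infimum of the nonempty set of values of $m$ for which $\phi_{N,m}$ has a root in $(1/2,1)$ (nonemptiness coming from the previous proposition), there is some such value $m_{1}$ with $m^{*}(N) \le m_{1} < m$; fix a root $p_{0} \in (1/2,1)$ of $\phi_{N,m_{1}}$. Then strict monotonicity in $m$ gives $\phi_{N,m}(p_{0}) < \phi_{N,m_{1}}(p_{0}) = 0$, while $\phi_{N,m}(1/2) > 0$ by the first fact; continuity and the intermediate value theorem then produce a root of $\phi_{N,m}$ in $(1/2,p_{0}) \subset (1/2,1)$. (The same argument applied at $m = m^{*}(N)$ itself works whenever that value is one of the ``good'' $m$, which is how the text uses it.) For the hypothesis phrased via $N^{*}$, I would invoke the previous proposition: the relations $m = m^{*}(N)$ and $N = N^{*}(m)$ cut out the same curve and both functions are decreasing, so ``above $N^{*}$'' and ``above $m^{*}$'' name the same region of $(N,m)$-space; alternatively one reruns the argument with the symmetric observation that $\phi_{N,m}(p)$ is also strictly decreasing in $N$ for fixed $p \in (0,1)$.

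I expect the only real content beyond bookkeeping to be the strict monotonicity of $\phi$ in $m$ (and, symmetrically, in $N$): it is what upgrades the single root supplied by the definition of $m^{*}(N)$ into a root for every larger $m$, and it does so even when the root at $m_{1}$ is only a tangency, since the strict inequality $m > m_{1}$ pushes $\phi_{N,m}(p_{0})$ strictly below zero before the intermediate value theorem is applied against the positive value $\phi_{N,m}(1/2)$. Everything else --- positivity on $(0,1/2]$, existence of $m^{*}$ and $N^{*}$, and the coincidence of the two curves --- is already available from the discussion and the previous proposition. Uniqueness of the root, though not required here, follows as in Proposition \ref{inf_pop}.
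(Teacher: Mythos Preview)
Your proposal is correct and follows essentially the same line as the paper, which presents the theorem as an immediate consequence of the preceding discussion and proposition (``Combining the discussion and previous proposition yields the following theorem''). You have simply made explicit the step the paper left implicit in the discussion---namely, that monotonicity of $\phi$ in $m$ (and in $N$), together with positivity at $p=1/2$ and the intermediate value theorem, propagates the existence of a root from $m^{*}(N)$ to every larger $m$---so nothing in your argument departs from the paper's intended route.
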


For a finite population, the key conclusion remains basically the same -- to win the beauty prize game, a player must be more clever than its opponents, but not too much more clever, especially if the group of opponents is large. The smaller the number of opponents, the more clever a player can be and still win. In other words, the baseline of the population and its size limits how successful clever strategies (in the sense of more iterations) can be.

Finally, can $(k+m)$-players invade a population of $k$-players? If the population consists of $M$ $(k+m)$-players and $N-M$ $k$-players, the target guess is
\[ c p^{k+1} \left(\frac{N+M(p^m - 1)}{N}\right).\]
This quantity is decreasing in $M$ since $(p^m - 1) < 0$, so the population dynamic favors the invaders as they push the winning guess further from the ambient $k$-players. Hence if the values of $N$, $p$, and $m$ are such that a single mutant strategy can invade, it will fixate under a selective or imitative dynamic, and is susceptible to still more clever mutants. If mutants of $l$-players of arbitrarily increasing $l$ continue to invade, the population will converge to the Nash equilibrium of the game, $l=\infty$, pushing the average guess down to zero. This is very similar to phenomena of schoolyard one-upmanship of guessing successively larger quantities, to which a clever child claims ``infinity!'' (often met with the reply of ``infinity plus one!'').

\section{Discussion}

An extended analysis could investigate games of mixed populations of $k$-players for different values of $k$, in the replicator and simultaneous play approaches. The preceding results show, however, that this analysis would need to be extremely meticulous because the winner of a pairwise or population-wide contest between $k$ and $l$ players depends heavily on the difference $k - l$, and the value of $p$. In particular, for many values of $p$, a difference of less than two favors the lower guesser, but a larger difference may not.

Simulations suggest that in a population of more than two player types, under selective action (e.g. a discrete replicator dynamic), fixation tends to occur for the player with the most steps, or if there is a large gap between the number of steps in the highest step players, the player with the second most number of steps, depending on initial conditions. Simulations were run with fitness determinations given by either an all-or-nothing payout or a payout proportional to the inverse of the distance of the player-type's guess to the target guess.

\bibliography{ref}
\bibliographystyle{plain}

\end{document}